\documentclass[conference]{IEEEtran}
\usepackage{amsmath}
\usepackage{amssymb}
\usepackage{mathrsfs}
\usepackage{cite}
\usepackage{epsfig}
\usepackage{epsfig}
\usepackage{theorem}
\usepackage{graphics}
\usepackage{hyperref}
\usepackage{epsfig}

\newtheorem{thm}{Theorem}

\newtheorem{defi}{Definition}

\begin{document}

\title{The Secrecy Capacity Region of the Degraded Vector Gaussian Broadcast Channel}

\author{Ghadamali Bagherikaram, Abolfazl S. Motahari, Amir K. Khandani\\
Coding and Signal Transmission Laboratory,\\
 Department of
Electrical
and Computer Engineering,\\
 University of Waterloo, Waterloo, Ontario,
 N2L 3G1\\
 Emails: \{gbagheri,abolfazl,khandani\}@cst.uwaterloo.ca
}
 \maketitle
\footnote{Financial support provided by Nortel and the corresponding
matching funds by the Natural Sciences and Engineering Research
Council of Canada (NSERC), and Ontario Centers of Excellence (OCE)
are gratefully acknowledged.}
\begin{abstract}
In this paper, we consider a scenario where a source node wishes to
broadcast two confidential messages for two respective receivers via
a Gaussian MIMO broadcast channel. A wire-tapper also receives the
transmitted signal via another MIMO channel. It is assumed that the
channels are degraded and the wire-tapper has the worst channel. We
establish the capacity region of this scenario. Our achievability
scheme is a combination of the superposition of Gaussian codes and
randomization within the layers which we will refer to as Secret
Superposition Coding. For the outerbound, we use the notion of
enhanced channel to show that the secret superposition of Gaussian
codes is optimal. It is shown that we only need to enhance the
channels of the legitimate receivers, and the channel of the
eavesdropper remains unchanged.
\end{abstract}
\section{Introduction}
Recently there has been significant research conducted in both
theoretical and practical aspects of wireless communication systems
with Multiple-Input Multiple-Output (MIMO) antennas. Most works have
focused on the role of MIMO in enhancing the throughput and
robustness. In this work, however, we focus on the role of such
multiple antennas in enhancing wireless security.

The information-theoretic single user secure communication problem
was first characterized by Wyner in \cite{1}. Wyner considered a
scenario in which a wire-tapper receives the transmitted signal over
a degraded channel with respect to the legitimate receiver's
channel. He measured the level of ignorance at the eavesdropper by
its equivocation and characterized the capacity-equivocation region.
Wyner's work was then extended to the general broadcast channel with
confidential messages by Csiszar et al. \cite{2}. They considered
transmitting confidential information to the legitimate receiver
while transmitting common information to both the legitimate
receiver and the wire-tapper. They established a
capacity-equivocation region of this channel. The secrecy capacity
for the Gaussian wire-tap channel was characterized by
Leung-Yan-Cheong in \cite{3}.

The Gaussian MIMO wire-tap channel has recently been considered by
Khisti et al. in \cite{4,5}. Finding the optimal distribution, which
maximizes the secrecy capacity for this channel is a nonconvex
problem. Khisti et al., however, followed an indirect approach to
evaluate the secrecy capacity of  Csiszar et al. They used a
genie-aided upper bound and characterized the secrecy capacity as
the saddle-value of a min-max problem to show that Gaussian
distribution is optimal. Motivated by the broadcast nature of the
wireless communication systems, we considered the secure broadcast
channel in \cite{6}. In this work, we characterized the secrecy
capacity region of the degraded broadcast channel and showed that
the secret superposition coding is optimal.

The capacity region of the conventional Gaussian MIMO broadcast
channel is studied in \cite{7} by Weingarten et al. The notion of an
enhanced broadcast channel is introduced in this work and is used
jointly with entropy power inequality to characterize the capacity
region of the degraded vector Gaussian broadcast channel. They
showed that the superposition of Gaussian codes is optimal for the
degraded vector Gaussian broadcast channel and that dirty-paper
coding is optimal for the nondegraded case.

In this paper, we aim to characterize the secrecy capacity region of
a secure degraded vector Gaussian MIMO broadcast channel. Our
achievability scheme is a combination of the superposition of
Gaussian codes and randomization within the layers. To prove the
converse, we use the notion of enhanced channel and show that the
secret superposition of Gaussian codes is optimal. We have extended
the results of this paper to the general Gaussian MIMO broadcast
channel in \cite{8} and showed that secret dirty paper coding of
Gaussian codes is optimal.

We acknowledge two other independent and concurrent works of
\cite{9,10} where the authors considered the secrecy capacity region
of the Gaussian MIMO broadcast channel.

The rest of the paper is organized as follows. In section II we
introduce some preliminaries. In section III, we establish the
secrecy capacity region of the Gaussian vector broadcast channel. In
Section V, we conclude the paper.

\section{Preliminaries}
Consider a Secure Gaussian Multiple-Input Multiple-Output Broadcast
Channel (SGMBC) as depicted in Fig. \ref{f1}.
\begin{figure}
\centerline{\includegraphics[scale=.5]{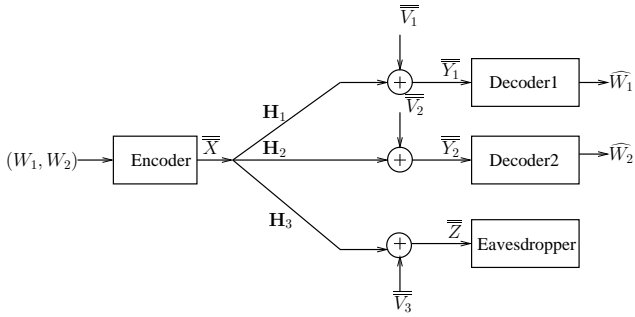}} \caption{Secure Gaussian MIMO Broadcast
Channel} \label{f1}
\end{figure}
In this confidential setting, the transmitter wishes to send two
independent messages $(W_{1},W_{2})$ to the respective receivers in
$n$ uses of the channel and prevent the eavesdropper from having any
information about the messages. At a specific time, the signals
received by the destinations and the eavesdropper are given by
\begin{IEEEeqnarray}{lr}\label{eq1}\nonumber
\mathbf{y_{1}}=\mathbf{H}_{1}\mathbf{x}+\mathbf{n_{1}},\\
\mathbf{y_{2}}=\mathbf{H}_{2}\mathbf{x}+\mathbf{n_{2}},\\
\mathbf{z}=\mathbf{H}_{3}\mathbf{x}+\mathbf{n_{3}}, \nonumber
\end{IEEEeqnarray}
where
\begin{itemize}
\item $\mathbf{x}$ is a real input vector of size $t\times 1$ under an input
covariance constraint. We require that
$E[\mathbf{x}^{T}\mathbf{x}]\preceq \mathbf{S}$ for a positive
semi-definite matrix $\mathbf{S}\succeq 0$.
Here,$\prec,\preceq,\succ$, and $\succeq$ represent partial ordering
between symmetric matrices where $\mathbf{B}\succeq \mathbf{A}$
means that $(\mathbf{B}-\mathbf{A})$ is a positive semi-definite
matrix.
\item $\mathbf{y_{1}}$, $\mathbf{y_{2}}$, and $\mathbf{z}$ are real output
vectors  which are received by the destinations and the eavesdropper
respectively. These are vectors of size $r_{1} \times 1$, $r_{2}
\times 1$, and $r_{3} \times 1$, respectively.
\item $\mathbf{H}_{1}$, $\mathbf{H}_{2}$, and $\mathbf{H}_{3}$ are fixed, real gain matrices
which model the channel gains between the transmitter and the
receivers. These are matrices of size $r_{1} \times t$, $r_{2}
\times t$, and $r_{3} \times t$ respectively. The channel state
information is assumed to be known perfectly at the transmitter and
at all receivers.
\item $\mathbf{n_{1}}$, $\mathbf{n_{2}}$ and $\mathbf{n_{3}}$ are real Gaussian
random vectors with zero means and covariance matrices
$\mathbf{N_{1}}=E[\mathbf{n_{1}}\mathbf{n_{1}}^{T}]\succ 0$,
$\mathbf{N_{2}}=E[\mathbf{n_{2}}\mathbf{n_{2}}^{T}]\succ 0$, and
$\mathbf{N_{3}}=E[\mathbf{n_{3}}\mathbf{n_{3}}^{T}]\succ 0$
respectively.
\end{itemize}
Let $W_{1}$ and $W_{2}$ denote the the message indices of user $1$
and user $2$, respectively. Furthermore, let
$\overline{\overline{X}}$, $\overline{\overline{Y}}_{1}$,
$\overline{\overline{Y}}_{2}$, and $\overline{\overline{Z}}$ denote
the random channel input and random channel outputs matrices over a
block of $n$ samples. Let $\overline{\overline{V}}_{1}$,
$\overline{\overline{V}}_{2}$, and $\overline{\overline{V}}_{3}$
denote the additive noises of the channels. Thus,
\begin{IEEEeqnarray}{lr}\nonumber
\overline{\overline{Y}}_{1}=\mathbf{H}_{1}\overline{\overline{X}}+\overline{\overline{V}}_{1},\\
\overline{\overline{Y}}_{2}=\mathbf{H}_{2}\overline{\overline{X}}+\overline{\overline{V}}_{2},\\ \nonumber
\overline{\overline{Z}}=\mathbf{H}_{3}\overline{\overline{X}}+\overline{\overline{V}}_{3}.
\end{IEEEeqnarray}
Note that $\overline{\overline{V}}_{i}$ is an $r_{i} \times n$
random matrix and $\mathbf{H}_{i}$ is an $r_{i} \times t$
deterministic matrix where $i=1,2,3$. The columns of
$\overline{\overline{V}}_{i}$ are independent Gaussian random
vectors with covariance matrices $\mathbf{N_{i}}$ for $i=1,2,3$. In
addition $\overline{\overline{V}}_{i}$ is independent of
$\overline{\overline{X}}$, $W_{1}$ and $W_{2}$. A
$((2^{nR_{1}},2^{nR_{2}}),n)$ code for the above channel consists of
a stochastic encoder
\begin{equation}
f:(\{1,2,...,2^{nR_{1}}\}\times\{1,2,...,2^{nR_{2}}\})\rightarrow
\overline{\overline{\mathcal{X}}},
\end{equation}
and two decoders,
\begin{equation}
g_{1}:\overline{\overline{\mathcal{Y}}}_{1}\rightarrow \{1,2,...,2^{nR_{1}}\},
\end{equation}
and
\begin{equation}
g_{2}:\overline{\overline{\mathcal{Y}}}_{2}\rightarrow \{1,2,...,2^{nR_{2}}\}.
\end{equation}
where a script letter with double overline denotes the finite alphabet of a random vector.
The average probability of error is defined as the probability that
the decoded messages are not equal to the transmitted messages; that
is,
\begin{equation}
P_{e}^{(n)}=P(g_{1}(\overline{\overline{Y}}_{1})\neq W_{1}\cup g_{2}(\overline{\overline{Y}}_{2})\neq
W_{2}).
\end{equation}

The secrecy levels of confidential messages $W_{1}$ and
$W_{2}$ are measured at the eavesdropper in terms of equivocation
rates, which are defined as follows.
\begin{defi}
The equivocation rates $R_{e1}$, $R_{e2}$ and $R_{e12}$
 for the
secure broadcast channel are:
\begin{IEEEeqnarray}{lr}
R_{e1}=\frac{1}{n}H(W_{1}|\overline{\overline{Z}}),\\
\nonumber R_{e2}=\frac{1}{n}H(W_{2}|\overline{\overline{Z}}), \\
\nonumber
R_{e12}=\frac{1}{n}H(W_{1},W_{2}|\overline{\overline{Z}}).
\end{IEEEeqnarray}
\end{defi}
The perfect secrecy rates $R_{1}$ and $R_{2}$ are the amount of
information that can be sent to the legitimate receivers both
reliably \emph{and} confidentially.
\begin{defi}
A secrecy rate pair $(R_{1},R_{2})$ is said to be achievable if for
any $\epsilon>0,\epsilon_{1}>0,\epsilon_{2}>0,\epsilon_{3}>0$, there
exists a sequence of $((2^{nR_{1}},2^{nR_{2}}),n)$ codes, such that
for sufficiently large $n$,
\begin{IEEEeqnarray}{rl}
\label{l0}P_{e}^{(n)}&\leq \epsilon,\\
\label{l1}
 R_{e1}&\geq R_{1}-\epsilon_{1},\\
\label{l2} R_{e2}&\geq R_{2}-\epsilon_{2},\\
\label{l3}
 R_{e12}&\geq R_{1}+R_{2}-\epsilon_{3}.
%\frac{1}{n}H(W_{1},W_{2}|Z) \geq \frac{1}{n}H(W_{1},W_{2})-\epsilon
\end{IEEEeqnarray}
\end{defi}
In the above definition, the first condition concerns the
reliability, while the other conditions guarantee perfect secrecy
for each individual message and both messages as well. The model
presented in (\ref{eq1}) is SGMBC. For lack of space, the SGMBC
cannot be discussed within this paper, and we will only consider a
subclass of this channel here. The special subclass that we will
consider is the Secure Aligned Degraded MIMO Broadcast Channel
(SADBC). The MIMO broadcast channel of (\ref{eq1}) is said to be
aligned if the number of transmit antennas is equal to the number of
receive antennas at each of the users and the eavesdropper
($t=r_{1}=r_{2}=r_{3}$) and the gain matrices are all identity
matrices
$(\mathbf{H}_{1}=\mathbf{H}_{2}=\mathbf{H}_{3}=\mathbf{I})$.
Furthermore, if the additive noise vectors' covariance matrices are
ordered such that $0\prec \mathbf{N_{1}}\preceq
\mathbf{N_{2}}\preceq \mathbf{N_{3}}$, then the channel is SADBC.

\section{The Capacity Region of The SADBC}
In this section, we characterize the capacity region of the SADBC.
In \cite{6}, we considered the degraded broadcast channel with
confidential messages and establish its secrecy capacity region.
\begin{thm}\label{t1}
The capacity region for transmitting independent secret messages
over the degraded broadcast channel is the convex hull of the
closure of all $(R_{1},R_{2})$ satisfying
\begin{IEEEeqnarray}{rl}\label{l7}
    R_{1}&\leq I(X;Y_{1}|U)-I(X;Z|U), \\
    R_{2}&\leq I(U;Y_{2})-I(U;Z).
\end{IEEEeqnarray}
for some joint distribution $P(u)P(x|u)P(y_{1}, y_{2},z|x)$.
\end{thm}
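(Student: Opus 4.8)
The plan is to prove Theorem~\ref{t1} by a matching achievability scheme and converse, throughout exploiting the degradedness that places the eavesdropper as the worst receiver, i.e.\ the Markov chain $X \to Y_1 \to Y_2 \to Z$. The achievable scheme is secret superposition coding: a two-layer superposition codebook in which each layer carries extra randomization (dummy messages) whose rate is tuned to exactly saturate the eavesdropper's channel, thereby hiding the confidential information.

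For achievability I would fix $P(u)$ and $P(x|u)$ and set the dummy rates to $R_2' = I(U;Z)$ and $R_1' = I(X;Z|U)$. I generate $2^{n(R_2+R_2')}$ cloud centers $u^n$ i.i.d.\ $\sim \prod_i P(u_i)$, and for each cloud center $2^{n(R_1+R_1')}$ satellites $x^n$ i.i.d.\ $\sim \prod_i P(x_i|u_i)$; to send $(W_1,W_2)$ the encoder draws the two dummy indices uniformly and transmits the corresponding satellite. The degraded receiver~2 decodes the cloud center from $Y_2$, which is reliable when $R_2+R_2' \le I(U;Y_2)$, i.e.\ $R_2 \le I(U;Y_2)-I(U;Z)$; receiver~1 decodes both layers from $Y_1$, reliable when $R_1+R_1' \le I(X;Y_1|U)$ (the cloud constraint being slack by degradedness), i.e.\ $R_1 \le I(X;Y_1|U)-I(X;Z|U)$. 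The delicate part is the equivocation analysis establishing (\ref{l1})--(\ref{l3}). Here I would apply the standard wiretap equivocation bound layer by layer, observing that the dummy rates sum to $R_1'+R_2' = I(U;Z)+I(X;Z|U) = I(X;Z)$, so the randomization saturates the eavesdropper's entire channel and the list it can form given $Z^n$ is too large to pin down either message or the pair, forcing $\tfrac1n H(W_1|Z^n)$, $\tfrac1n H(W_2|Z^n)$, and $\tfrac1n H(W_1,W_2|Z^n)$ to their respective target values.

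For the converse I would begin from Fano's inequality together with the secrecy constraints. Combining reliability with the user-2 equivocation (\ref{l2}) cancels the small term $I(W_2;Z^n)$ and yields $nR_2 \le I(W_2;Y_2^n)-I(W_2;Z^n)+n\epsilon$; using the joint equivocation (\ref{l3}) one shows $I(W_1;Z^n|W_2)$ is likewise small and obtains $nR_1 \le I(W_1;Y_1^n|W_2)-I(W_1;Z^n|W_2)+n\epsilon$. The core step is single-letterization via the Csiszar sum identity: for $R_2$, with $U_i=(W_2,Y_2^{i-1},Z_{i+1}^n)$ the difference collapses to $\sum_i[I(U_i;Y_{2,i})-I(U_i;Z_i)]$; for $R_1$ (conditioned on $W_2$) the analogous manipulation, followed by a data-processing step upgrading $W_1$ to the input $X_i$, yields $\sum_i[I(X_i;Y_{1,i}|U_i)-I(X_i;Z_i|U_i)]$. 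Degradedness enters decisively at the upgrade, since the conditional Markov chain $X_i \to Y_{1,i} \to Z_i$ makes $I(X_i;Y_{1,i}|U_i,W_1)-I(X_i;Z_i|U_i,W_1)$ nonnegative, so replacing $W_1$ by $X_i$ only enlarges the bound. A time-sharing variable $Q$ uniform on $\{1,\dots,n\}$, with $U=(U_Q,Q)$ and $X=X_Q$, then produces the single-letter region, and the convex hull and closure follow from time-sharing and continuity.

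The main obstacle is in the converse. The two single-letter bounds are naturally expressed with different auxiliaries, one built from $Y_2^{i-1}$ and the other from $Y_1^{i-1}$, and reconciling them into one region under a single auxiliary requires the full degradedness chain $X \to Y_1 \to Y_2 \to Z$. It is also the degradedness that, at the step replacing $W_1$ by the input $X_i$, guarantees the subtracted eavesdropper terms recombine with the correct sign; without it the single-letter bound on $R_1$ would fail to match the achievable expression $I(X;Y_1|U)-I(X;Z|U)$, and the two bounds could not be closed against each other.
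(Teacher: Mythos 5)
Your proposal follows essentially the same route as the paper, which itself only sketches the argument and defers details to reference [6]: achievability by Cover's superposition coding augmented with random binning (dummy rates tuned to $I(U;Z)$ and $I(X;Z|U)$), and a converse combining Fano's inequality and the secrecy constraints with the Csisz\'ar sum identity and the degradedness ordering, exactly as the paper describes. Your sketch is consistent with that outline and correctly identifies where degradedness is needed, so there is nothing to flag.
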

\begin{proof}
Our achievable coding scheme is based on Cover's superposition scheme
and random binning. We refer to this scheme as the Secret Superposition
Scheme. In this scheme, randomization in the first
layer increases the secrecy rate of the second layer. Our converse proof is based
on a combination of the converse proof of the conventional degraded
broadcast channel and Csiszar Lemma. Please see \cite{6} for details.
\end{proof}
Note that finding optimal distribution which characterizes the
boundary points of (\ref{l7}) and   for the Gaussian channels
involves solving a functional, nonconvex optimization problem.
Usually nontrivial techniques and strong inequalities are used to
solve optimization problems of this type. Indeed, for the single
antenna case, we successfully evaluated the capacity expression of
this scheme in \cite{11}. Liu et al. in \cite{12} evaluated the
capacity expression of MIMO wire-tap channel by using the channel
enhancement method. In the following section, we state and prove our
result for the capacity region of SADBC.

First, we define the achievable rate region due to Gaussian codebook
under a covariance matrix constraint $\mathbf{S}\succeq 0$. The
achievability scheme of Theorem \ref{t1} is the secret superposition
of Gaussian codes and successive decoding at the first receiver.
According to the above theorem, for any covariance matrix input
constraint $\mathbf{S}$ and two semi-definite  matrices
$\mathbf{B_{1}}\succeq 0$ and $\mathbf{B_{2}}\succeq 0$ such that
$\mathbf{B_{1}}+\mathbf{B_{2}} \preceq \mathbf{S}$, it is possible
to achieve the following rates,
\begin{IEEEeqnarray}{rl}\nonumber
R_{1}^{G}(\mathbf{B_{1,2}},&\mathbf{N_{1,2,3}})=\\
\nonumber
&\frac{1}{2}\log|\mathbf{N_{1}^{-1}}(\mathbf{B_{1}}+\mathbf{N_{1}})|-\frac{1}{2}\log|\mathbf{N_{3}^{-1}}(\mathbf{B_{1}}+\mathbf{N_{3}})|,\\
\nonumber R_{2}^{G}(\mathbf{B_{1,2}},&\mathbf{N_{1,2,3}})=\\
\nonumber
&\frac{1}{2}\log\frac{|\mathbf{B_{1}}+\mathbf{B_{2}}+\mathbf{N_{2}}|}{|\mathbf{B_{1}}+\mathbf{N_{2}}|}-\frac{1}{2}\log\frac{|\mathbf{B_{1}}+\mathbf{B_{2}}+\mathbf{N_{3}}|}{|\mathbf{B_{1}}+\mathbf{N_{3}}|}.
\end{IEEEeqnarray}
\begin{defi}
Let $\mathbf{S}$ be a positive semi-definite matrix. Then, the
Gaussian rate region of SADBC under a covariance matrix constraint
$\mathbf{S}$ is given by
\begin{IEEEeqnarray}{rl}\nonumber
&\mathcal{R}^{G}(\mathbf{S},\mathbf{N_{1,2,3}})=   \\&\left\{
                                                       \begin{array}{ll}
                                                          \left(R_{1}^{G}(\mathbf{B_{1,2}},\mathbf{N_{1,2,3}}),R_{2}^{G}(\mathbf{B_{1,2}},\mathbf{N_{1,2,3}})\right)|
                                                           \\ \hbox{s.t}~~\mathbf{S}-(\mathbf{B_{1}}+\mathbf{B_{2}})\succeq 0,~\mathbf{B_{k}}\succeq 0, ~ k=1,2
                                                       \end{array}
                                                     \right\}.
\end{IEEEeqnarray}
\end{defi}
We will show that $\mathcal{R}^{G}(\mathbf{S},\mathbf{N_{1,2,3}})$
is the capacity region of the SADBC. Before that, certain
preliminaries need to be addressed.
\begin{defi}
The rate vector $R^{*}=(R_{1},R_{2})$ is said to be an optimal
Gaussian rate vector under the covariance matrix $\mathbf{S}$, if
$R^{*}\in \mathcal{R}^{G}(\mathbf{S},\mathbf{N_{1,2,3}})$ and if
there is no other rate vector $R^{'*}=(R_{1}^{'},R_{2}^{'})\in
\mathcal{R}^{G}(\mathbf{S},\mathbf{N_{1,2,3}})$ such that
$R_{1}^{'}\geq R_{1}$ and $R_{2}^{'}\geq R_{2}$ where at least one
the inequalities is strict. The set of positive semi-definite
matrices $(\mathbf{B_{1}^{*}},\mathbf{B_{2}^{*}})$ such that
$\mathbf{B_{1}^{*}}+\mathbf{B_{2}^{*}}\preceq \mathbf{S}$ is said to
be realizing matrices of an optimal Gaussian rate vector if the rate
vector
$\left(R_{1}^{G}(\mathbf{B_{1,2}^{*}},\mathbf{N_{1,2,3}}),R_{2}^{G}(\mathbf{B_{1,2}^{*}},\mathbf{N_{1,2,3}})\right)$
is an optimal Gaussian rate vector.
\end{defi}
\begin{defi}
A SADBC with noise covariance matrices of
$(\mathbf{N_{1}^{'}},\mathbf{N_{2}^{'}},\mathbf{N_{3}^{'}})$ is an
enhanced version of another SADBC with noise covariance matrices
$(\mathbf{N_{1}},\mathbf{N_{2}},\mathbf{N_{3}})$ if
\begin{equation}
\mathbf{N_{1}^{'}}\preceq \mathbf{N_{1}},~~\mathbf{N_{2}^{'}}\preceq
\mathbf{N_{2}},~~\mathbf{N_{3}^{'}}=
\mathbf{N_{3}},~~\mathbf{N_{1}^{'}}\preceq \mathbf{N_{2}^{'}}.
\end{equation}
\end{defi}
Obviously, the capacity region of the enhanced version contains the capacity region of the original channel. Note that in characterizing the capacity region of the conventional Gaussian MIMO broadcast channel, all channels must be enhanced by reducing the noise covariance matrices. In our scheme, however, we only enhance the channels for the legitimate receivers and the channel of the eavesdropper remains unchanged. This is due to the fact that the capacity region of the enhanced channel must contain the original capacity region. Reducing the noise covariance matrix of the eavesdropper's channel, however, may reduce the secrecy capacity region. The following theorem connects the definitions of the optimal Gaussian rate vector and the enhanced channel.
\begin{thm}\label{t2}
Consider a SADBC with positive definite noise covariance matrices $(\mathbf{N_{1}},\mathbf{N_{2}},\mathbf{N_{3}})$. Let $\mathbf{B_{1}^{*}}$ and $\mathbf{B_{2}^{*}}$ be realizing matrices of an optimal Gaussian rate vector under a transmit covariance matrix constraint $\mathbf{S}\succ 0$. There then exists an enhanced SADBC with noise covariance matrices $(\mathbf{N_{1}^{'}},\mathbf{N_{2}^{'}},\mathbf{N_{3}^{'}})$ that the following properties hold.
\begin{enumerate}
  \item Enhancement:\\ \nonumber
   $\mathbf{N_{1}^{'}}\preceq \mathbf{N_{1}},~~~\mathbf{N_{2}^{'}}\preceq \mathbf{N_{2}},~~~\mathbf{N_{3}^{'}}= \mathbf{N_{3}},~~~\mathbf{N_{1}^{'}}\preceq \mathbf{N_{2}^{'}}$,
  \item Proportionality:\\ \nonumber
  There exists an $\alpha \geq 0$ and a matrix $\mathbf{A}$ such that \\ \nonumber
  $(\mathbf{I}-\mathbf{A})(\mathbf{B_{1}^{*}}+\mathbf{N_{1}^{'}})=\alpha\mathbf{A}(\mathbf{B_{1}^{*}}+\mathbf{N_{3}^{'}})$,
  \item Rate and optimality preservation: \\ \nonumber
  $R_{k}^{G}(\mathbf{B_{1,2}^{*}},\mathbf{N_{1,2,3}})=R_{k}^{G}(\mathbf{B_{1,2}^{*}},\mathbf{N_{1,2,3}^{'}}) ~~~ \forall k=1,2$,  \nonumber furthermore, $\mathbf{B_{1}^{*}}$ and $\mathbf{B_{2}^{*}}$ are realizing matrices of an optimal Gaussian rate vector in the enhanced channel.
  %\item Perfect secrecy preservation: \\ \nonumber
%  The perfect secrecy requirements of (\ref{l1}), (\ref{l2}), and (\ref{l3}) are true.
\end{enumerate}
\end{thm}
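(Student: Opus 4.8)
The plan is to adapt the channel-enhancement technique of Weingarten et al. \cite{7} to the secret-superposition rate expressions, where the essential new feature is that the eavesdropper covariance $\mathbf{N_{3}}$ is frozen while only $\mathbf{N_{1}},\mathbf{N_{2}}$ are enhanced. First I would characterize the optimal Gaussian rate vector as the solution of a weighted sum-rate maximization: since $(\mathbf{B_{1}^{*}},\mathbf{B_{2}^{*}})$ realize a Pareto-optimal point of $\mathcal{R}^{G}(\mathbf{S},\mathbf{N_{1,2,3}})$, a supporting-hyperplane argument produces priorities $\mu_{1},\mu_{2}\geq 0$ for which $(\mathbf{B_{1}^{*}},\mathbf{B_{2}^{*}})$ maximize $\mu_{1}R_{1}^{G}+\mu_{2}R_{2}^{G}$ over $\{\mathbf{B_{1}},\mathbf{B_{2}}\succeq 0,\ \mathbf{B_{1}}+\mathbf{B_{2}}\preceq \mathbf{S}\}$. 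Differentiating the log-det expressions gives
\[
\frac{\partial R_{1}^{G}}{\partial \mathbf{B_{1}}}=\frac{1}{2}\left[(\mathbf{B_{1}}+\mathbf{N_{1}})^{-1}-(\mathbf{B_{1}}+\mathbf{N_{3}})^{-1}\right],
\]
together with analogous expressions for $\partial R_{2}^{G}/\partial\mathbf{B_{1}}$ and $\partial R_{2}^{G}/\partial\mathbf{B_{2}}$, each a difference of inverse signal-plus-noise matrices in which the eavesdropper term $(\,\cdot\,+\mathbf{N_{3}})^{-1}$ appears and is what must be left untouched.

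Next I would write the Karush--Kuhn--Tucker conditions. Introducing positive semidefinite matrix multipliers $\mathbf{M_{1}},\mathbf{M_{2}}$ for $\mathbf{B_{1}},\mathbf{B_{2}}\succeq 0$ and $\mathbf{M_{S}}$ for $\mathbf{S}-\mathbf{B_{1}}-\mathbf{B_{2}}\succeq 0$, stationarity at $(\mathbf{B_{1}^{*}},\mathbf{B_{2}^{*}})$ yields two matrix equations relating the above gradients to $\mathbf{M_{1}}-\mathbf{M_{S}}$ and $\mathbf{M_{2}}-\mathbf{M_{S}}$, accompanied by complementary slackness $\mathbf{M_{1}}\mathbf{B_{1}^{*}}=\mathbf{M_{2}}\mathbf{B_{2}^{*}}=0$ and $\mathbf{M_{S}}(\mathbf{S}-\mathbf{B_{1}^{*}}-\mathbf{B_{2}^{*}})=0$. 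The $\mathbf{B_{2}}$-equation, for instance, reads $(\mathbf{B_{1}^{*}}+\mathbf{B_{2}^{*}}+\mathbf{N_{2}})^{-1}=(\mathbf{B_{1}^{*}}+\mathbf{B_{2}^{*}}+\mathbf{N_{3}})^{-1}+\tfrac{2}{\mu_{2}}(\mathbf{M_{S}}-\mathbf{M_{2}})$.

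Then I would construct the enhanced channel by absorbing $\mathbf{M_{1}}$ and $\mathbf{M_{2}}$ into the legitimate noise covariances while keeping $\mathbf{N_{3}^{'}}=\mathbf{N_{3}}$. Concretely, I define $\mathbf{N_{2}^{'}}$ from the $\mathbf{B_{2}}$-equation with $\mathbf{M_{2}}$ deleted,
\[
(\mathbf{B_{1}^{*}}+\mathbf{B_{2}^{*}}+\mathbf{N_{2}^{'}})^{-1}:=(\mathbf{B_{1}^{*}}+\mathbf{B_{2}^{*}}+\mathbf{N_{3}})^{-1}+\tfrac{2}{\mu_{2}}\mathbf{M_{S}},
\]
and define $\mathbf{N_{1}^{'}}$ analogously from the $\mathbf{B_{1}}$-equation (after substituting the $\mathbf{B_{2}}$-relation) with $\mathbf{M_{1}}$ deleted. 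Since $\mathbf{M_{1}},\mathbf{M_{2}}\succeq 0$, the deletions only enlarge the inverse covariances, so $\mathbf{N_{1}^{'}}\preceq\mathbf{N_{1}}$ and $\mathbf{N_{2}^{'}}\preceq\mathbf{N_{2}}$, giving property~1 once the ordering $\mathbf{N_{1}^{'}}\preceq\mathbf{N_{2}^{'}}$ is checked from the explicit formulas and the original ordering $\mathbf{N_{1}}\preceq\mathbf{N_{2}}$. The proportionality of property~2 is then a rewriting of the enhanced $\mathbf{B_{1}}$-stationarity equation: reading $\mathbf{A}$ and $\alpha$ off the surviving multiplier $\mathbf{M_{S}}$ and the weight ratio $\mu_{1}/\mu_{2}$ produces a matrix $\mathbf{A}$ satisfying $(\mathbf{I}-\mathbf{A})(\mathbf{B_{1}^{*}}+\mathbf{N_{1}^{'}})=\alpha\mathbf{A}(\mathbf{B_{1}^{*}}+\mathbf{N_{3}})$. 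For property~3, optimality is immediate because the enhanced gradients reproduce exactly the original KKT equations at $(\mathbf{B_{1}^{*}},\mathbf{B_{2}^{*}})$; rate preservation follows from complementary slackness, since $\mathbf{M_{k}}$ annihilates $\mathbf{B_{k}^{*}}$ the matrices $\mathbf{N_{k}^{'}}$ and $\mathbf{N_{k}}$ agree on the range of $\mathbf{B_{k}^{*}}$, whence a determinant identity shows the relevant log-det differences are unchanged.

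The hard part will be satisfying enhancement and rate preservation simultaneously under the constraint $\mathbf{N_{3}^{'}}=\mathbf{N_{3}}$: all of the slack carried by $\mathbf{M_{1}},\mathbf{M_{2}}$ must be routed into $\mathbf{N_{1}^{'}},\mathbf{N_{2}^{'}}$ alone, and I must verify both that the constructed matrices are genuine covariances ($\mathbf{N_{i}^{'}}\succ 0$) and that they retain the degraded ordering $\mathbf{N_{1}^{'}}\preceq\mathbf{N_{2}^{'}}$ so that the enhanced channel is still a valid SADBC rather than merely an enhanced one. A secondary difficulty is the degenerate case in which $\mathbf{M_{1}},\mathbf{M_{2}},\mathbf{M_{S}}$ are singular or $\mathbf{B_{k}^{*}}$ is rank-deficient, so the inverses above do not literally exist; there I expect to carry out the construction on the appropriate invariant subspaces or by a perturbation-and-continuity argument, exactly as in the non-secret vector Gaussian broadcast channel of \cite{7}.
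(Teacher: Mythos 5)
Your proposal follows essentially the same route as the paper: characterize $(\mathbf{B_{1}^{*}},\mathbf{B_{2}^{*}})$ as a weighted-sum-rate maximizer, write the KKT conditions with multipliers for the constraints $\mathbf{B_{k}}\succeq 0$ and $\mathbf{B_{1}}+\mathbf{B_{2}}\preceq\mathbf{S}$, absorb the multipliers of the positivity constraints into $\mathbf{N_{1}}$ and $\mathbf{N_{2}}$ while freezing $\mathbf{N_{3}}$, and then read off enhancement from positive semidefiniteness, proportionality from the enhanced stationarity equation, rate preservation from complementary slackness ($\mathbf{B_{k}^{*}}\mathbf{O_{k}}=0$) via determinant identities, and optimality preservation from the coincidence of the enhanced and original KKT systems. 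Your definition of $\mathbf{N_{2}^{'}}$ at the $\mathbf{B_{1}^{*}}+\mathbf{B_{2}^{*}}$ level coincides with the paper's definition at the $\mathbf{B_{1}^{*}}$ level precisely because of that same complementary slackness, so the construction is the same matrix.
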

Theorem \ref{t2} states that if there exists the realizing matrices
of the boundary of $\mathcal{R}^{G}(\mathbf{S},\mathbf{N_{1,2,3}})$,
then the secret superposition coding with Gaussian codebook is the
optimal choice for the capacity region of a SADBC. Note that this
Theorem provides a sufficient condition to evaluate the capacity
region of SADBC.

\begin{proof}
The realizing matrices $\mathbf{B_{1}^{*}}$ and $\mathbf{B_{2}^{*}}$
are the solution of the following optimization problem:
\begin{IEEEeqnarray}{rl}
&\max_{(\mathbf{B_{1}},\mathbf{B_{2}})} R_{1}^{G}(\mathbf{B_{1,2}},\mathbf{N_{1,2,3}})+\mu R_{2}^{G}(\mathbf{B_{1,2}},\mathbf{N_{1,2,3}})\\ \nonumber
&\hbox{s.t}~~ \mathbf{B_{1}}\succeq 0,~~~\mathbf{B_{2}}\succeq 0,~~~\mathbf{B_{1}}+\mathbf{B_{2}}\preceq\mathbf{S},
\end{IEEEeqnarray}
where $\mu \geq 1$. Using the Lagrange Multiplier method, the above
constraint optimization problem is equivalent to the following
unconditional optimization problem:
\begin{IEEEeqnarray}{rl}\nonumber
&\max_{(\mathbf{B_{1}},\mathbf{B_{2}})}
R_{1}^{G}(\mathbf{B_{1,2}},\mathbf{N_{1,2,3}})+\mu
R_{2}^{G}(\mathbf{B_{1,2}},\mathbf{N_{1,2,3}})\\
\nonumber&+Tr\{\mathbf{B_{1}}\mathbf{O_{1}}\}
+Tr\{\mathbf{B_{2}}\mathbf{O_{2}}\}+Tr\{\mathbf{(S-B_{1}-B_{2})}\mathbf{O_{3}}\},
\end{IEEEeqnarray}
where $\mathbf{O_{1}}$, $\mathbf{O_{2}}$, and $\mathbf{O_{3}}$ are
positive semi-definite $t \times t$ matrices such that
$Tr\{\mathbf{B_{1}^{*}}\mathbf{O_{1}}\}=0$,
$Tr\{\mathbf{B_{2}^{*}}\mathbf{O_{2}}\}=0$, and
$Tr\{\mathbf{(S-B_{1}^{*}-B_{2}^{*})}\mathbf{O_{3}}\}=0$. As all
$\mathbf{B_{k}^{*}}, ~k=1,2$, $\mathbf{O_{i}},~ i=1,2,3$, and
$\mathbf{S-B_{1}^{*}-B_{2}^{*}}$ are positive semi-definite
matrices, then we must have
$\mathbf{B_{k}^{*}}\mathbf{O_{k}}=0,~~k=1,2$ and
$(\mathbf{S-B_{1}^{*}-B_{2}^{*}})\mathbf{O_{3}}=0$. According to the
necessary KKT conditions, and after some manipulations we have:
\begin{IEEEeqnarray}{rl}\label{eq3}\nonumber
     (\mathbf{B_{1}^{*}}+\mathbf{N_{1}})^{-1}+(\mu-1)(\mathbf{B_{1}^{*}}+\mathbf{N_{3}})^{-1}+\mathbf{O_{1}}&=\\ \mu(\mathbf{B_{1}^{*}}+\mathbf{N_{2}})^{-1}+\mathbf{O_{2}},
     \end{IEEEeqnarray}
     \begin{IEEEeqnarray}{rl}\nonumber
\mu(\mathbf{B_{1}^{*}}+\mathbf{B_{2}^{*}}+\mathbf{N_{2}})^{-1}+\mathbf{O_{2}}&= \mu(\mathbf{B_{1}^{*}}+\mathbf{B_{2}^{*}}+\mathbf{N_{3}})^{-1}\\&+\mathbf{O_{3}}.
\end{IEEEeqnarray}
We choose the noise covariance matrices of the enhanced SADBC as the following:
\begin{IEEEeqnarray}{rl}
\mathbf{N_{1}^{'}}&=\left(\mathbf{N_{1}}^{-1}+\mathbf{O_{1}}\right)^{-1},\\
\nonumber
\mathbf{N_{2}^{'}}&=\left(\left(\mathbf{B_{1}^{*}}+\mathbf{N_{2}}\right)^{-1}+\frac{1}{\mu}\mathbf{O_{2}}\right)^{-1}-\mathbf{B_{1}^{*}},\\
\nonumber \mathbf{N_{3}^{'}}&=\mathbf{N_{3}}.
\end{IEEEeqnarray}
As $\mathbf{O_{1}}\succeq 0$ and $\mathbf{O_{2}}\succeq 0$, then the
above choice has the enhancement property. The expression
$\left((\mathbf{B_{1}^{*}}+\mathbf{N_{1}})^{-1}+\mathbf{O_{1}}\right)^{-1}$
can be written as:
\begin{IEEEeqnarray}{rl}\nonumber
\left((\mathbf{B_{1}^{*}}+\mathbf{N_{1}})^{-1}+\mathbf{O_{1}}\right)^{-1}&=
\big((\mathbf{B_{1}^{*}}+\mathbf{N_{1}})^{-1}\\ \nonumber &\left(\mathbf{I}+(\mathbf{B_{1}^{*}}+\mathbf{N_{1}})\mathbf{O_{1}}\right)\big)^{-1}\\
\nonumber&\stackrel{(a)}{=}\left(\mathbf{I}+\mathbf{N_{1}}\mathbf{O_{1}}\right)^{-1}(\mathbf{B_{1}^{*}}+\mathbf{N_{1}})\\ \nonumber &-\mathbf{B_{1}^{*}}+\mathbf{B_{1}^{*}}\\
\nonumber&=\left(\mathbf{I}+\mathbf{N_{1}}\mathbf{O_{1}}\right)^{-1}\big((\mathbf{B_{1}^{*}}+\mathbf{N_{1}})\\ \nonumber &-\left(\mathbf{I}+\mathbf{N_{1}}\mathbf{O_{1}}\right)\mathbf{B_{1}^{*}}\big)+\mathbf{B_{1}^{*}}\\
\nonumber&\stackrel{(b)}{=}\left(\mathbf{I}+\mathbf{N_{1}}\mathbf{O_{1}}\right)^{-1}\mathbf{N_{1}}+\mathbf{B_{1}^{*}}\\
\nonumber&=\left(\mathbf{N_{1}}\left(\mathbf{N_{1}^{-1}}+\mathbf{O_{1}}\right)\right)^{-1}\mathbf{N_{1}}+\mathbf{B_{1}^{*}}\\
\nonumber&=\left(\mathbf{N_{1}^{-1}}+\mathbf{O_{1}}\right)^{-1}+\mathbf{B_{1}^{*}}\\
\nonumber &=\mathbf{B_{1}^{*}}+\mathbf{N_{1}^{'}},
\end{IEEEeqnarray}
where $(a)$ and $(b)$ follows from the fact that $\mathbf{B_{1}^{*}}\mathbf{O_{1}}=0$. Similarly, it can be shown that
\begin{IEEEeqnarray}{rl}\nonumber
\mu(\mathbf{B_{1}^{*}}+\mathbf{N_{2}})^{-1}+\mathbf{O_{2}}=\mu(\mathbf{B_{1}^{*}}+\mathbf{N_{2}^{'}})^{-1},
\end{IEEEeqnarray}
Therefore, according to (\ref{eq3}) the following property holds for the enhanced channel.
\begin{IEEEeqnarray}{rl}\label{eq4}\nonumber
     (\mathbf{B_{1}^{*}}+\mathbf{N_{1}^{'}})^{-1}+(\mu-1)(\mathbf{B_{1}^{*}}+\mathbf{N_{3}^{'}})^{-1}= \mu(\mathbf{B_{1}^{*}}+\mathbf{N_{2}^{'}})^{-1}.
\end{IEEEeqnarray}
Since $\mathbf{N_{1}^{'}}\preceq \mathbf{N_{2}^{'}}\preceq
\mathbf{N_{3}^{'}}$ then, there exists a matrix $\mathbf{A}$ such
that
$\mathbf{N_{2}^{'}}=(\mathbf{I}-\mathbf{A})\mathbf{N_{1}^{'}}+\mathbf{A}\mathbf{N_{3}^{'}}$
where
$\mathbf{A}=(\mathbf{N_{2}^{'}}-\mathbf{N_{1}^{'}})(\mathbf{N_{3}^{'}}-\mathbf{N_{1}^{'}})^{-1}$.
Therefore, the above equation can be written as.
\begin{IEEEeqnarray}{rl}\nonumber
     &(\mathbf{B_{1}^{*}}+\mathbf{N_{1}^{'}})^{-1}+(\mu-1)(\mathbf{B_{1}^{*}}+\mathbf{N_{3}^{'}})^{-1}=\\ \nonumber&
     \mu\left[(\mathbf{I}-\mathbf{A})(\mathbf{B_{1}^{*}}+\mathbf{N_{1}^{'}})+\mathbf{A}(\mathbf{B_{1}^{*}}+\mathbf{N_{3}^{'}})\right]^{-1}.
\end{IEEEeqnarray}
Let
$(\mathbf{I}-\mathbf{A})(\mathbf{B_{1}^{*}}+\mathbf{N_{1}^{'}})=\alpha\mathbf{A}(\mathbf{B_{1}^{*}}+\mathbf{N_{3}^{'}})$
then after some manipulations, the above equation becomes
\begin{IEEEeqnarray}{rl}
     \frac{1}{\alpha}\mathbf{I}+(\mu-1-\frac{1}{\alpha})\mathbf{A}=\frac{\mu}{\alpha+1}\mathbf{I}.
\end{IEEEeqnarray}
The above equation is satisfied by $\alpha=\frac{1}{\mu-1}$ which completes the proportionality property.

We can now prove the rate conservation property. The expression $\frac{|\mathbf{B_{1}^{*}}+\mathbf{N_{1}^{'}}|}{|\mathbf{N_{1}^{'}}|}$ can be written as follow.
\begin{IEEEeqnarray}{rl}\label{eq5}
\frac{|\mathbf{B_{1}^{*}}+\mathbf{N_{1}^{'}}|}{|\mathbf{N_{1}^{'}}|}&=\frac{|\mathbf{I}|}{|\mathbf{N_{1}^{'}}\left(\mathbf{B_{1}^{*}}+\mathbf{N_{1}^{'}}\right)^{-1}|}\\
\nonumber &=\frac{|\mathbf{I}|}{|\left(\mathbf{B_{1}^{*}}+\mathbf{N_{1}^{'}}-\mathbf{B_{1}^{*}}\right)\left(\mathbf{B_{1}^{*}}+\mathbf{N_{1}^{'}}\right)^{-1}|}\\
\nonumber &=\frac{|\mathbf{I}|}{|\mathbf{I}-\mathbf{B_{1}^{*}}\left(\mathbf{B_{1}^{*}}+\mathbf{N_{1}^{'}}\right)^{-1}|}\\
\nonumber &=\frac{|\mathbf{I}|}{|\mathbf{I}-\mathbf{B_{1}^{*}}\left((\mathbf{B_{1}^{*}}+\mathbf{N_{1}})^{-1}+\mathbf{O_{1}}\right)|}\\
\nonumber &\stackrel{(a)}{=}\frac{|\mathbf{I}|}{|\mathbf{I}-\mathbf{B_{1}^{*}}\left(\mathbf{B_{1}^{*}}+\mathbf{N_{1}}\right)^{-1}|}\\
\nonumber &=\frac{|\mathbf{B_{1}^{*}}+\mathbf{N_{1}}|}{|\mathbf{N_{1}}|},
\end{IEEEeqnarray}
where $(a)$ once again follows from the fact that
$\mathbf{B_{1}^{*}}\mathbf{O_{1}}=0$. To complete the proof of rate
conservation, consider the following equalities.
\begin{IEEEeqnarray}{rl}\label{eq6}
\frac{|\mathbf{B_{1}^{*}}+\mathbf{B_{2}^{*}}+\mathbf{N_{2}^{'}}|}{|\mathbf{B_{1}^{*}}+\mathbf{N_{2}^{'}}|}&=\frac{|\mathbf{B_{2}^{*}}\left(\mathbf{B_{1}^{*}}+\mathbf{N_{2}^{'}}\right)^{-1}+\mathbf{I}|}{|\mathbf{I}|}\\ \nonumber
&=\frac{|\mathbf{B_{2}^{*}}\left((\mathbf{B_{1}^{*}}+\mathbf{N_{2}})^{-1}+\frac{1}{\mu}\mathbf{O_{2}}\right)+\mathbf{I}|}{|\mathbf{I}|}\\
\nonumber
&\stackrel{(a)}{=}\frac{|\mathbf{B_{1}^{*}}+\mathbf{B_{2}^{*}}+\mathbf{N_{2}}|}{|\mathbf{B_{1}^{*}}+\mathbf{N_{2}}|},
\end{IEEEeqnarray}
where\ $(a)$ follows from the fact
$\mathbf{B_{2}^{*}}\mathbf{O_{2}}=0$. Therefore, according to
(\ref{eq5}), (\ref{eq6}), and the fact that
$\mathbf{N_{3}^{'}}=\mathbf{N_{3}}$, the rate preservation property
holds for the enhanced channel. To prove the optimality
preservation, we need to show that
$(\mathbf{B_{1}^{*}},\mathbf{B_{2}^{*}})$ are also realizing
matrices of an optimal Gaussian rate vector in the enhanced channel.
For that purpose, note that the necessary KKT conditions for the
enhanced channel coincides with the KKT conditions of the original
channel.
%since $R_{1}^{G}(\mathbf{B_{1,2}},\mathbf{N_{1,2,3}})+\mu R_{2}^{G}(\mathbf{B_{1,2}},\mathbf{N_{1,2,3}})=R_{1}^{G}(\mathbf{B_{1,2}},\mathbf{N_{1,2,3}^{'}})+\mu R_{2}^{G}(\mathbf{B_{1,2}},\mathbf{N_{1,2,3}^{'}})$, then $(\mathbf{B_{1}^{*}},\mathbf{B_{2}^{*}})$
%maximizes $R_{1}^{G}(\mathbf{B_{1,2}},\mathbf{N_{1,2,3}^{'}})+\mu R_{2}^{G}(\mathbf{B_{1,2}},\mathbf{N_{1,2,3}^{'}})$ for the enhanced channel.
%To complete the proof, we need to show that the enhanced channel has the perfect secrecy requirements. For that purpose, since $\mathbf{N_{3}^{'}}=\mathbf{N_{3}}$, then the eavesdropper's received signal $\mathbf{Z}$ of the enhanced channel is the same as that of the original case. Therefore, according to the rate preservation property, perfect secrecy requirements of (\ref{l1}), (\ref{l2}), and (\ref{l3}) are true for the enhanced channel
\end{proof}
We can now use Theorem \ref{t2} to prove that
$\mathcal{R}^{G}(\mathbf{S},\mathbf{N_{1,2,3}})$ is the capacity
region of the SADBC. We follow Bergmans' approach \cite{13} to prove
a contradiction. Note that since the original channel is not
proportional, we cannot apply  Bergmans' proof on the original
channel directly. Here we apply his proof on the enhanced channel
instead.
\begin{thm}
Consider a SADBC with positive definite noise covariance matrices
$(\mathbf{N_{1}},\mathbf{N_{2}},\mathbf{N_{3}})$. Let
$\mathcal{C}(\mathbf{S},\mathbf{N_{1,2,3}})$ denote the capacity
region of the SADBC under a covariance matrix constraint
$\mathbf{S}\succ 0$ . Then,
$\mathcal{C}(\mathbf{S},\mathbf{N_{1,2,3}})=\mathcal{R}^{G}(\mathbf{S},\mathbf{N_{1,2,3}})$.
\end{thm}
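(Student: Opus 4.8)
The plan is to establish the two inclusions $\mathcal{R}^{G}(\mathbf{S},\mathbf{N_{1,2,3}})\subseteq\mathcal{C}(\mathbf{S},\mathbf{N_{1,2,3}})$ and $\mathcal{C}(\mathbf{S},\mathbf{N_{1,2,3}})\subseteq\mathcal{R}^{G}(\mathbf{S},\mathbf{N_{1,2,3}})$ separately. Achievability ($\mathcal{R}^{G}\subseteq\mathcal{C}$) is immediate from Theorem~\ref{t1}: choosing jointly Gaussian auxiliaries $U=X_{2}\sim\mathcal{N}(0,\mathbf{B_{2}})$ and $X=X_{1}+X_{2}$ with $X_{1}\sim\mathcal{N}(0,\mathbf{B_{1}})$ independent of $X_{2}$, a direct evaluation of $I(X;Y_{1}|U)-I(X;Z|U)$ and $I(U;Y_{2})-I(U;Z)$ reproduces $R_{1}^{G}(\mathbf{B_{1,2}},\mathbf{N_{1,2,3}})$ and $R_{2}^{G}(\mathbf{B_{1,2}},\mathbf{N_{1,2,3}})$, so every point of the Gaussian region is achievable by secret superposition coding.

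The substance is the converse, $\mathcal{C}\subseteq\mathcal{R}^{G}$. Since $\mathcal{C}$ is closed and convex (by the usual time-sharing argument), it suffices to show that no achievable pair lies strictly outside $\mathcal{R}^{G}$; following Bergmans I argue by contradiction. If some achievable pair were outside $\mathcal{R}^{G}$, then because the Pareto boundary of $\mathcal{R}^{G}$ is traced by the maximizers of $R_{1}+\mu R_{2}$ over $\{\mathbf{B_{1}},\mathbf{B_{2}}\succeq 0,\ \mathbf{B_{1}}+\mathbf{B_{2}}\preceq\mathbf{S}\}$, there would be a weight $\mu\geq 1$ for which the pair strictly exceeds $\max_{\mathcal{R}^{G}}\{R_{1}+\mu R_{2}\}$. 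By compactness of the constraint set a maximizer $(\mathbf{B_{1}^{*}},\mathbf{B_{2}^{*}})$ exists and is a realizing pair of an optimal Gaussian rate vector, so I may invoke Theorem~\ref{t2} to construct the enhanced SADBC $(\mathbf{N_{1}^{'}},\mathbf{N_{2}^{'}},\mathbf{N_{3}^{'}})$ with the enhancement, proportionality, and rate/optimality-preservation properties. Two features of this enhanced channel make the contradiction possible: since only the legitimate noises are lowered while $\mathbf{N_{3}^{'}}=\mathbf{N_{3}}$, its secrecy region still contains $\mathcal{C}$, so the offending pair remains achievable there; and rate preservation gives $R_{k}^{G}(\mathbf{B_{1,2}^{*}},\mathbf{N_{1,2,3}^{'}})=R_{k}^{G}(\mathbf{B_{1,2}^{*}},\mathbf{N_{1,2,3}})$, so the Gaussian boundary value is unchanged by the enhancement.

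It remains to carry out the converse on the enhanced channel, where the decisive feature is that it is now \emph{proportional}. Starting from Fano's inequality and the secrecy requirements (\ref{l1})--(\ref{l3}), I would treat $W_{2}$ as the cloud-center variable and single-letterize $n(R_{1}+\mu R_{2})$ into a weighted sum of conditional output entropies. Writing the secrecy terms as differences of differential entropies along the degraded chain $Y_{1}^{'}\to Y_{2}^{'}\to Z$, the problem reduces to an extremal inequality asserting that, among all $(U,X)$ with $\mathrm{Cov}(X)\preceq\mathbf{S}$, the weighted combination is maximized by jointly Gaussian inputs with the realizing covariances $(\mathbf{B_{1}^{*}},\mathbf{B_{2}^{*}})$. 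This is where the proportionality $(\mathbf{I}-\mathbf{A})(\mathbf{B_{1}^{*}}+\mathbf{N_{1}^{'}})=\alpha\mathbf{A}(\mathbf{B_{1}^{*}}+\mathbf{N_{3}^{'}})$ with $\alpha=\tfrac{1}{\mu-1}$ is used: it is precisely the condition under which the conditional vector entropy power inequality applied across the two degradation steps becomes tight at the Gaussian optimum, so the bound collapses to $R_{1}^{G}(\mathbf{B_{1,2}^{*}},\mathbf{N_{1,2,3}^{'}})+\mu R_{2}^{G}(\mathbf{B_{1,2}^{*}},\mathbf{N_{1,2,3}^{'}})$. By rate preservation this equals $\max_{\mathcal{R}^{G}}\{R_{1}+\mu R_{2}\}$, contradicting the assumption; letting $\mu$ range over $[1,\infty)$ and treating the two single-rate corners separately yields $\mathcal{C}\subseteq\mathcal{R}^{G}$.

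The main obstacle is this extremal inequality. In the non-secret MIMO broadcast converse only two entropy terms are linked by a single EPI and a one-parameter optimization suffices; here the eavesdropper contributes a \emph{third} term $h(Z|U)$ whose coefficient enters with the ``wrong'' sign for a naive term-by-term EPI, so the three conditional entropies $h(Y_{1}^{'}|U)$, $h(Y_{2}^{'}|U)$ and $h(Z|U)$ must be controlled jointly. The argument closes only because channel enhancement has replaced the original non-proportional SADBC by a proportional one, and because $\mathbf{N_{3}^{'}}=\mathbf{N_{3}}$ keeps the eavesdropper entropies tied to the same degraded chain. I expect verifying that the stationarity conditions of the extremal inequality coincide with the KKT conditions established in the proof of Theorem~\ref{t2}, together with the passage from the block entropies to the matrix form of the entropy power inequality, to be the most delicate parts.
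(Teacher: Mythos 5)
Your overall architecture (Gaussian achievability via Theorem~\ref{t1}, converse by contradiction on the enhanced channel from Theorem~\ref{t2}, with proportionality as the enabling feature) matches the paper's, but the step you yourself flag as ``the main obstacle'' --- a joint extremal inequality controlling $h(Y_1'|U)$, $h(Y_2'|U)$ and $h(Z|U)$ simultaneously, with the eavesdropper term entering with the wrong sign --- is exactly the step you leave unproven, and it is not how the paper closes the argument. The paper never bounds the weighted sum $R_1+\mu R_2$ by a single extremal inequality. Instead it runs a two-stage Bergmans-type chain on \emph{differences} of entropies along the degraded chain $\mathbf{y_1}\rightarrow\mathbf{y_2}\rightarrow\mathbf{z}$, writing $\mathbf{z}=\mathbf{y_1}+\widetilde{\mathbf{n}}$ and $\mathbf{y_2}=\mathbf{y_1}+\mathbf{A}^{1/2}\widetilde{\mathbf{n}}$: (i) the assumption $R_1>R_1^{G}$ forces $h(\mathbf{y_1}|\mathbf{u})-h(\mathbf{z}|\mathbf{u})$ strictly above its Gaussian value; (ii) Costa's entropy power inequality transfers this to a strict lower bound on $h(\mathbf{y_2}|\mathbf{u})-h(\mathbf{z}|\mathbf{u})$, and the proportionality identity $(\mathbf{I}-\mathbf{A})(\mathbf{B_{1}^{*}}+\mathbf{N_{1}^{'}})=\alpha\mathbf{A}(\mathbf{B_{1}^{*}}+\mathbf{N_{3}^{'}})$ is used precisely here to evaluate the Costa bound in closed form as $\frac{1}{2}\log|\mathbf{B_{1}^{*}}+\mathbf{N_{2}^{'}}|-\frac{1}{2}\log|\mathbf{B_{1}^{*}}+\mathbf{N_{3}^{'}}|$; (iii) the assumption $R_2>R_2^{G}$ then pushes the \emph{unconditional} difference $h(\mathbf{y_2})-h(\mathbf{z})$ strictly above its Gaussian value, which contradicts the already-known Liu--Viswanath result that a Gaussian input maximizes $h(\mathbf{x}+\mathbf{n_2})-h(\mathbf{x}+\mathbf{n_3})$ under a covariance constraint. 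So the only extremal inequality actually invoked is a two-term, unconditional one from the literature; the conditional three-term statement you would need is sidestepped entirely. Until you either prove that joint inequality or restructure along these lines, your converse is a plan rather than a proof.

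Two smaller discrepancies. The paper's contradiction is coordinatewise (the offending point dominates an optimal Gaussian rate vector in both components), not a violated supporting hyperplane; the weight $\mu$ enters only through the KKT conditions of Theorem~\ref{t2} that produce the enhancement and the value $\alpha=1/(\mu-1)$. Your hyperplane formulation additionally presupposes that $\mathcal{R}^{G}$ is convex and that every Pareto point maximizes $R_1+\mu R_2$ for some $\mu\geq 1$, which you assert without justification. Also, the Fano/single-letterization step you sketch is not redone here at all --- the paper simply imports the single-letter bounds of Theorem~\ref{t1} and applies them to the enhanced channel.
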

\begin{proof}
The achievability scheme is secret superposition coding with
Gaussian codebook. For the converse proof, we use a contradiction
argument and assume that there exists an achievable rate
vector$(R_{1},R_{2})$ which is not in the Gaussian region. We can
apply the steps of Bergmans' proof of \cite{10} on the enhanced
channel to show that this assumption is impossible.
%We may omit the details here for lack of space.
According to the Theorem \ref{t1}, $R_{1}$ is bounded as follows.
\begin{IEEEeqnarray}{rl}\nonumber
R_{1}&\leq h(\mathbf{y_{1}}|\mathbf{u})-h(\mathbf{z}|\mathbf{u})-\left(h(\mathbf{y_{1}}|\mathbf{x},\mathbf{u})-h(\mathbf{z}|\mathbf{x},\mathbf{u})\right)\\
\nonumber&=  h(\mathbf{y_{1}}|\mathbf{u})-h(\mathbf{z}|\mathbf{u})-\frac{1}{2}\left(\log|\mathbf{N_{1}^{'}}|-\log|\mathbf{N_{3}^{'}}|)\right)
\end{IEEEeqnarray}
Since $R_{1}>R_{1}^{G}(\mathbf{B_{1,2}},\mathbf{N_{1,2,3}^{'}})$, the above inequality means that
\begin{IEEEeqnarray}{rl}\nonumber
h(\mathbf{y_{1}}|\mathbf{u})-h(\mathbf{z}|\mathbf{u})>\frac{1}{2}\left(\log|\mathbf{B_{1}^{*}}+\mathbf{N_{1}^{'}}|-\log|\mathbf{B_{1}^{*}}+\mathbf{N_{3}^{'}}|)\right)
\end{IEEEeqnarray}
By the definition of matrix $\mathbf{A}$ and since
$\mathbf{y_{1}}\rightarrow\mathbf{y_{2}}\rightarrow\mathbf{z}$ forms
a Morkov chain, the received signals $\mathbf{z}$ and
$\mathbf{y_{2}}$ can be written as
$\mathbf{z}=\mathbf{y_{1}}+\mathbf{\widetilde{n}}$ and
$\mathbf{y_{2}}=\mathbf{y_{1}}+\mathbf{A}^{\frac{1}{2}}\mathbf{\widetilde{n}}$
where $\mathbf{\widetilde{n}}$ is an independent Gaussian noise with
covariance matrix
$\mathbf{\widetilde{N}}=\mathbf{N_{3}^{'}}-\mathbf{N_{1}^{'}}$.
According to Costa's Entropy Power Inequality and the previous
inequality, we have
\begin{IEEEeqnarray}{rl}\label{eq7}
\nonumber
h(\mathbf{y_{2}}|\mathbf{u})-&h(\mathbf{z}|\mathbf{u})
 \\ \nonumber &\geq\frac{t}{2}\log\left(|\mathbf{I}-\mathbf{A}|^{\frac{1}{t}}2^{\frac{2}{t}\left(h(\mathbf{y_{1}|\mathbf{u}})-h(\mathbf{z|\mathbf{u}})\right)}+|\mathbf{A}|^{\frac{1}{t}})\right)\\
\nonumber &>\frac{t}{2}\log\left(\frac{|\mathbf{I}-\mathbf{A}|^{\frac{1}{t}}|\mathbf{B_{1}^{*}}+\mathbf{N_{1}^{'}}|^{\frac{1}{t}}}{|\mathbf{B_{1}^{*}}+\mathbf{N_{3}^{'}}|^{\frac{1}{t}}}+|\mathbf{A}|^{\frac{1}{t}})\right)\\
&\stackrel{(a)}{=}\frac{1}{2}\log(\mathbf{B_{1}^{*}}+\mathbf{N_{2}^{'}})-\frac{1}{2}\log(\mathbf{B_{1}^{*}}+\mathbf{N_{3}^{'}})
\end{IEEEeqnarray}
where (a) is due to the proportionality property. The rate $R_{2}$ is bounded as follows
\begin{IEEEeqnarray}{rl}\nonumber
R_{2}&\leq h(\mathbf{y_{2}})-h(\mathbf{z})-\left(h(\mathbf{y_{2}}|\mathbf{u})-h(\mathbf{z}|\mathbf{u})\right)
\end{IEEEeqnarray}
Using (\ref{eq7}) and the fact that
$R_{2}>R_{2}^{G}(\mathbf{B_{1,2}},\mathbf{N_{1,2,3}^{'}})$, the
above inequality means that
\begin{IEEEeqnarray}{rl}\nonumber
&h(\mathbf{y_{2}})-h(\mathbf{z})\geq R_{2}+h(\mathbf{y_{2}}|\mathbf{u})-h(\mathbf{z}|\mathbf{u})>\\
\nonumber
&\frac{1}{2}\log(\mathbf{B_{1}^{*}}+\mathbf{B_{2}^{*}}+\mathbf{N_{2}^{'}})-\frac{1}{2}\log(\mathbf{B_{1}^{*}}+\mathbf{B_{2}^{*}}+\mathbf{N_{3}^{'}})
\end{IEEEeqnarray}
which is a contradiction with the fact that Gaussian distribution
maximizes
$h(\mathbf{x}+\mathbf{n_{2}})-h(\mathbf{x}+\mathbf{n_{3}})$
\cite{14}.
\end{proof}
\section{Conclusion}
A scenario where a source node wishes to broadcast two confidential
messages for two respective receivers via a Gaussian MIMO broadcast
channel, while a wire-tapper also receives the transmitted signal
via another MIMO channel is considered. We considered the secure
vector Gaussian degraded  broadcast channel and established its
capacity region. Our achievability scheme is the secret
superposition of Gaussian codes. Instead of solving a nonconvex
problem, we used the notion of an enhanced channel to show that
secret superposition of Gaussian codes is optimal. To characterize
the secrecy capacity region of the vector Gaussian degraded
broadcast channel, we only enhanced the channels for the legitimate
receivers, and the channel of the eavesdropper remains unchanged.

\end{document}